\documentclass[sn-mathphys-ay]{sn-jnl}


\usepackage{graphicx}%
\usepackage{multirow}%
\usepackage{amsmath,amssymb,amsfonts}%
\usepackage{amsthm}%
\usepackage{mathrsfs}%
\usepackage[title]{appendix}%
\usepackage{xcolor}%
\usepackage{textcomp}%
\usepackage{manyfoot}%
\usepackage{booktabs}%
\usepackage{algorithm}%
\usepackage{algorithmicx}%
\usepackage{algpseudocode}%
\usepackage{listings}%
\usepackage{comment}



\theoremstyle{thmstyleone}%
\newtheorem{theorem}{Theorem}
%

\theoremstyle{thmstyletwo}%
\newtheorem{corollary}{Corollary}%
\theoremstyle{thmstylethree}%

\raggedbottom

\newcommand{\cent}{\mbox{\textcent}}

\newcommand{\mymatrix}[2]{\left( \begin{array}{#1} #2\end{array} \right)}

\newcommand{\myvector}[1]{\mymatrix{c}{#1}}
\newcommand{\myrvector}[1]{\mymatrix{r}{#1}}

\newcommand{\mypar}[1]{\left( #1 \right) }

\newcommand{\dollar}{\$}

\begin{document}

\title[Article Title]{Two-way affine automata can verify every language}


\author*[1]{\fnm{Zeyu} \sur{Chen}}\email{chenzeyu@zju.edu.cn}

\author[2]{\fnm{Abuzer} \sur{Yakary{\i}lmaz}}\email{abuzer.yakaryilmaz@lu.lv}


\affil[1]{School of Mathematical Sciences, Zhejiang University, Hangzhou 310058, People's Republic of China}

\affil[2]{\orgdiv{Center for Quantum Computer Science}, \orgname{University of Latvia}, \orgaddress{\street{Street}, \city{R\={\i}ga}, \postcode{LV-1586}, \country{Latvia}}}



\abstract{When used as verifiers in Arthur-Merlin systems, two-way quantum finite automata can verify membership in all languages with bounded error with double-exponential expected running time, which cannot be achieved by their classical counterparts. We obtain the same result for affine automata with single-exponential expected time. We show that every binary (and r-ary) language is verified by some two-way affine finite automata verifiers by presenting two protocols: A weak verification protocol uses a single affine register and the input is read once; and, a strong verification protocol uses two affine registers. These results reflects the remarkable verification capabilities of affine finite automata.}

\keywords{Affine automata, Arthur-Merlin games, all languages, interactive proof systems}



\maketitle

\section{Introduction}\label{sec1}
Quantum computation promises significant computational advantages in many practical tasks \citep{shor1994,grover1997,peruzzo2014}. In the case of computation with constant memory, it was shown that two-way quantum finite automata are more efficient and more powerful than their classical counterparts \citep{freivalds1981,kondacs1997, AmbainisW02}; and, they can solve certain problems using exponentially less memory when reading the input in realtime mode \citep{ambainis1998}. Here interference plays a pivotal rule in quantum advantage  \citep{hirvensalo2018,AY21}. To further understand the power of interference, \cite{diaz2016} introduced affine computation and affine finite automata (AfAs), simulating quantum interference classically and ``measuring'' the probability based on the $\textit{l}_1$-norm. In recent years, it has been shown that AfAs and their generalizations are more powerful than QFAs and PFAs in various scenarios \citep{villagra2018,nakanishi2017,ibrahimov2018}.

Arthur-Merlin (AM) proof system was first introduced by \cite{babai1985}. It is an interactive proof system in which the verifier's coin tosses (random choices) are visible to the prover. \cite{say2017} showed that two-way finite automata with quantum and classical states (2QCFAs) can verify membership on any language with bounded error as verifiers in AM proof systems, far outperforming their classical counterparts. \cite{khadieva2021} presented a protocol, showing that affine automata with deterministic and affine states (ADfA) have the same power for unary language. In this paper, we extend this result for binary (and $r$-ary) languages by allowing the input tape head to stay on the same symbol or move to the left. 

We define two-way affine finite automaton as a two-way automaton with deterministic and affine states. As a verifier, we add the capability of nondeterministic choices. Then, 
we present a weak protocol (i.e., the non-members does not need to be rejected with high probability), where the verifier uses a single affine register weighted once and never moves its input head to the left. We make this protocol stronger by using an additional affine register executing a probabilistic procedure. The first protocol runs in exponential time for the member strings, and it may run forever in some branches for non-member strings. Then, we ensure that the second protocol terminates in exponential expected time for the non-member strings. Compared to the techniques used in \cite{say2017}  and \cite{khadieva2021}, we present partly different techniques in this paper.
 
The paper is organized as follows. Section \ref{sec:pre} presents basic notations and definitions. Section \ref{sec:2ADfA} introduces two-way automaton with deterministic and affine states, and Section \ref{sec:2ANfA} introduces its nondeterministic version. We present our weak protocol in Section \ref{sec:weak-binary} and our strong protocol in  Section \ref{sec:strong-verification}. We conclude our paper with Section \ref{sec:con}.

\section{Preliminary}
\label{sec:pre}

We assume that the reader is familiar with automata theory, interactive proof systems, and quantum automata. We refer the reader to \cite{Sip13} for the basics of automata theory, \cite{Con93} for an excellent review of space-bounded interactive proof systems, \cite{SY14} for a pedagogical introduction to quantum automata, and \cite{AY21} for a comprehensive review on quantum automata.

\subsection{Common notation}

Throughout the paper, $|\cdot|$ refers to the $\ell_1$-norm; $\Sigma$ denotes the input alphabet not containing $\cent$ and $\$ $ (the left and right end-markers, respectively); $\tilde{\Sigma}$ is the set $\Sigma \cup\{\cent, \$\}$; $\Sigma^*$ denotes the set of all strings defined on the alphabet $\Sigma$ including the empty string denoted $\varepsilon$; and, for a given string $w \in \Sigma^*, \tilde{w}$ denotes the string $\cent w \$$. Moreover, for any string $w,|w|$ is the length of $w$, and for non-empty string $w$, $w_i$ is its symbol from the left. For a given vector $v$, $v[i]$ is its $i$th entry.

\subsection{Affine computation}

An $m$-state affine register $\{e_1,e_2,\ldots,e_m\}$ is represented by $\mathbb{R}^m$, and the affine state of this system is represented by $m$-dimensional vector:

$$
v=\left(\begin{array}{c}
v_1 \\
\vdots \\
v_m
\end{array}\right) \in \mathbb{R}^m
$$
satisfying that $\sum_{j=1}^m v[j]=1$, where $v[j]$ is the value of the system being in the $j$th state. We also use vector 
\[
e_j = \myvector{0 \\ \vdots \\ 0 \\ 1 \\ 0 \\ \vdots \\ 0} \leftarrow j\text{th} \ \text { entry }
\]
to represent that the affine state is in $e_j$.

Any affine operator of this system is a linear operator preserving the entry summation. Formally, an affine operator is represented by an $(m \times m)$-dimensional matrix:
$$
A=\left(\begin{array}{ccc}
a_{1,1} & \cdots & a_{1, m} \\
\vdots & \ddots & \vdots \\
a_{m, 1} & \cdots & a_{m, m}
\end{array}\right) \in \mathbb{R}^{m \times m}
$$
satisfying that $\sum_{j=1}^m |a_{j, i}|=1$ for each column $i$. When the $A$ is applied to the affine state $v$, the new state  
\[     
    v^{\prime}=M v
\]
is also an affine state.
It is trivial to check that the multiplication of two affine operators is an affine operator.

To retrieve information from the register when in $v$, we apply a weighting operator: the $j$th state is observed with probability
\[
P_j = \frac{\left|v_j\right|}{\left|v\right|} \in[0,1].
\] 
If it is observed, the new affine state becomes $e_j$.

\subsection{Discussion about partial weighting}

Weighting is similar to quantum measurement on a computational basis: We observe a single state. It is possible that the quantum system can still be in a superposition after a partial measurement. But this is not allowed for affine systems due to the normalization constraint. Suppose that we are in the affine state
\[
\myrvector{1 \\ -1 \\ 1}.
\]
If we are allowed to make "partial weighting" based on $\{e_1,e_2\}$ and $\{e_3\}$, we would get the following state before normalization:
\[
\myrvector{1 \\ -1 \\ 0 } \mbox{ and } \myvector{0 \\ 0 \\ 1}.
\]
Then, it is easy to see that the first vector cannot be an affine vector as we cannot make the entry summation 1 with any normalization factor.

Therefore, we allow to use more than one affine register so that we can still have ``superposition'' in some, while weighting the others.

\subsection{Encoding binary languages as real numbers}

We use a binary encoding technique given in \cite{say2017} with some modifications.
Let $ \Sigma = \{0,1\}$. We enumerate all possible binary strings in shortlex order as:
\[
\begin{array}{ll}
     1 & \varepsilon \\
     2 &  0 \\
     3 &  1 \\
     4 &  00 \\
     5 &  01 \\
     6 &  10 \\
     7 &  11 \\
     8 & 000 \\
     \vdots & \vdots \\
\end{array}.
\]
We denote the $i$th string in this order $\Sigma^*(i)$. Remark that for any given binary string $w$, we have 
\[
w = \Sigma^*( (1w)_2 ),
\]
where $(1w)_2$ is a number in binary.
Let $G_L$ be the characteristic function $L$ as: 
\begin{itemize}
    \item $G_L(w) = 1 $ if $w \in L$ and
    \item $G_L(w)=0$ if $w \notin L$.
\end{itemize}

We encode all membership information of $L$ on a single real number:
\begin{equation}
    \alpha_{L}=\sum_{i=1}^{\infty} \frac{G_{L}\left(\Sigma^*(i)\right)}{6^i}=\frac{G_{L}\left(\Sigma^*(1)\right)}{6}+\frac{G_{L}\left(\Sigma^*(2)\right)}{6^2}+\frac{G_{L}\left(\Sigma^*(3)\right)}{6^3}+\cdots 
\end{equation}
For $j \geq 1$, we define 
\begin{equation}
\label{alj}
    \alpha_L[j] = \frac{G_{L}\left(\Sigma^*(j)\right)}{6}+\frac{G_{L}\left(\Sigma^*(j+1)\right)}{6^2}+\frac{G_{L}\left(\Sigma^*(j+2)\right)}{6^3}+\cdots   .
\end{equation}
Remark that:
\begin{itemize}
    \item $\alpha_L = \alpha_L[1]$, 
    \item $\alpha_L[j+1] = 6 \cdot \alpha_L[j] - G_L(\Sigma^*(j)) $, and
    \item $\alpha_L[j]$ is a real number between 0 and $\frac{1}{5}$.
\end{itemize}

\subsection{Guessing the digits of $\alpha_L$}
\label{sec:guessing}

For a given language $L \subseteq \Sigma^* = \{0,1\}^* $, let $g_1, g_2, g_3, g_4, \ldots $ be a sequence of guesses for $G_L(\varepsilon), G_L(0), G_L(1), G_L(00), \ldots$. We start with $\beta_1 = \alpha_L[1]$. Then, we iteratively calculate $\beta_i$ as
\[
\beta_i = 6 \cdot \beta_{i-1} - g_{i-1}.
\]
It is easy to see that if our guesses are correct, then we have 
\[
    \beta_i = \alpha_L[i] .
\]
Suppose that our guesses are correct until $g_i$, but $g_i$ is incorrect. Then, we have $\beta_i = 6 \cdot \alpha_L[i-1] - g_i$.
\begin{itemize}
    \item If $g_i = 0$: $\beta_i = \alpha_L[i] + 1 $.
    \item If $g_i = 1$: $\beta_i = \alpha_L[i] - 1 $.
\end{itemize}
We observe that $0 \leq |\beta_j| \leq \frac{1}{5}$ for $j < i$, i.e., it is bounded above by $\frac{1}{5}$. Now, we have $|\beta_i| \geq \frac{4}{5} $, i.e., it is bounded below by $\frac{4}{5}$. In the next iterations, independent of correctness of next guesses, the bound from below will get bigger and bigger. Thus, we can conclude that
\begin{equation}
\label{eq:incorrect-guess}
    |\beta_j| \geq \frac{4}{5} \mbox{ for } j \geq i, 
\end{equation}
where $g_i$ is the first incorrect guess in the sequence. We refer to this fact later in our proofs.

\section{Two-way affine finite automaton}
\label{sec:2ADfA}

We start with defining two-way deterministic finite automaton (2DFA). An $n$-state 2DFA $M$ is a 6 tuple
\[
    M = (S,\Sigma,\delta,s_I,s_a,s_r),
\]
where
\begin{itemize}
    \item $ S = \{s_1,\ldots,s_n\} $ is the set of states;
    \item $\delta : S \times \tilde{\Sigma} \rightarrow S \times \{-1,0,1\}$ is the transition function described below; and,
    \item $s_I \in S$, $s_a \in S$, and $s_r \in S$ are the initial, accepting, and rejecting states, respectively, with $s_I \neq s_a \neq s_r$.
\end{itemize}

The automaton $M$ has a read-only input tape with a reading head. The tape is one-way infinite and the tape cells (squares) are indexed by $0, 1, 2, \ldots $. For a given input $w \in \Sigma^*$, the string $\tilde{w} =\cent w \dollar $ is placed on the input tape between the $0$th and $(|w|+1)$th cells. At the beginning of computation, $M$ is in $s_I$ and the input head is placed on $\cent$. The computation is governed by the transition function $\delta$. When in state $s \in \Sigma$ and reading symbol $\sigma \in \tilde{\Sigma}$, the automaton $M$ switches to state $s' \in S$ and the head position is updated by $d \in \{-1,-0,1\}$, where $\delta(s,\sigma) = (s',d)$. Here $-1$ and $1$ means the head is moved one cell to the left and the right, respectively, and, $0$ means the head is stationary. It must be guaranteed that the automaton never attempts to leave $\tilde{w}$. If $M$ enters state $s_a$ (resp., $s_r$), then the computation is terminated and the input is accepted (resp., rejected). 

We define two-way affine automaton as a 2DFA augmented with $k>0$ affine registers. Formally, a two-way automaton with deterministic and affine states (2ADfA) $M$ is a 7-tuple
\[
M = (S,\Sigma,\delta, s_I, s_a, s_r, \{R_1,R_2,\ldots,R_k\}),
\]
where, different from a 2DFA, $R_i$ is the $i$th affine register and $\delta$ is composed by two transition functions described below. 

The register $R_i$ is a double
\[
    R_i = ( E_i, \mathcal{A}_i = \{ A_{i,1},\ldots,A_{i,l_i} \} ),
\]
where $E_i = \{e_1,\ldots,e_{m_i}\}$ is the set of affine states and $\mathcal{A}_i$ is the set of affine operators applied to the $i$th affine register. We fix $e_1$ as the initial state for each register.

The computation of the 2ADfA $M$ is governed classically. In each step, we have two phases: affine and classical. By using the notation for  2DFA given above, in the affine phase, based on $(s,\sigma)$, $M$ applies either an affine operator or weighting operator to each affine register. It is possible that some registers are weighted, and some others are updated by affine operators. If any register is weighted, then one of its states is observed and the classical transition takes into account this outcome. In other words, $(s',d)$ is determined by not only $(s,\sigma)$ but also the observed affine state(s) (if there is any). The termination and decision are made classically. 

Formally, $\delta = (\delta_a,\delta_c)$, where $\delta_a$ governs the affine phase and $\delta_c$ governs the following classical phase. For each $(s,\sigma) \in S \times \tilde{\Sigma} $,
\begin{equation}
    \label{eq:delta_a}
    \delta_a(s,\sigma) = (O_1,\ldots,O_k),
\end{equation}
where $O_i \in \mathcal{A}_{i} \cup \{ W_i \}$ is the operator applied to the $i$the affine register with the exception that $R_i$ is weighted if $O_i = W_i$. If $R_i$ is weighted, then the outcome is in $ \{1, \ldots, m_i\} $, the indices of the affine states. If it is not  (i.e., an affine operator is applied), then we assume that the outcome is 0. 

After the affine phase, $\delta_c$ makes a classical transition as:
\begin{equation}
    \label{eq:delta_c}
    \delta_c(s,\sigma, \tau_1,\ldots,\tau_k ) = (s',d),
\end{equation}
where $\tau_i = 0 $ if an affine operator is applied during the affine phase, and $\tau_i \in \{1,\ldots,m_i\}$ is the observed outcome if $R_i$ is weighted during the affine phase.

Remark that after each weighting operator, the computation can split into branches with some probabilities. So, the computation of a 2ADfA on $w$ form a computation tree, where each node is 
\[
(s,j,v_1,\ldots,v_k),
\]
where $s$ is the current state, $j$ is the position of input head, and $v_i$ is the affine state of $R_i$. After applying $\delta$, this node creates one or more children. Here each edge to a child can be assigned with its nonzero occurring probability. The root of this tree is 
\[
(s_I,0,e_1,\ldots,e_1).
\]
It is possible that this tree can be infinite. If there is no branching, then we can see the 2ADfA $M$ as a 2DFA. Each leave node is either accepting or rejecting. (It is possible that the number of leaves can be infinite.) Based on the leaves, we calculate the total accepting and rejecting probability of $M$ on $w$, denoted $Acc_M(w)$ and $Rej_M(w)$, with $ 0 \leq Acc_M(w) + Rej_M(w) \leq 1  $.

\section{Two-way affine finite automata verifiers}
\label{sec:2ANfA}

We focus on only Arthur-Merlin proof systems in this paper. As introduced  by \cite{CHPW98} and lated used by \cite{khadieva2021}, we define our two-way affine verifiers as a nondeterministic version of 2ADfA given in the previous section.

A two-way finite automata with nondeterministic and affine states (2ANfA) $N$ is the same tuple of a 2ADfA
\[
N = (S,\Sigma,\delta, s_I, s_a, s_r, \{R_1,R_2,\ldots,R_k\})
\]
except that the transition functions can make nondeterministic choices. Here both $\delta_a$ and $\delta_c$ can make nondeterministic choices. Instead of Eq.~\ref{eq:delta_a}, we have
\begin{equation}
    \label{eq:non_delta_a}
    \delta_a(s,\sigma) \rightarrow \{ (O_{1,1},\ldots,O_{1,k}),(O_{2,1},\ldots,O_{2,k}), \ldots, (O_{t,1},\ldots,O_{t,k}) \} .
\end{equation}
Thus, $N$ nondeterministically picks one of the defined $t$ options and then implements it.
For the classical phase, instead of Eq.~\ref{eq:delta_c}, we have
\begin{equation}
    \label{eq:non_delta_c}
    \delta_c(s,\sigma, \tau_1,\ldots,\tau_k ) \rightarrow  \{ (s_1',d_1), (s_2',d_2), \ldots, (s_z',d_z)  \} .
\end{equation}
Thus, $N$ nondeterministically picks one of the defined $z$ options and then implements it. Of course, due to the nondeterministic nature of $\delta_a$, we may have different $(\tau_1,\ldots,\tau_k )$'s in the classical phase and so different $z$ values. For each nondeterministic rule, the values of $d$ and $z$ may be different.

For each nondeterministic strategy (by picking one of the available options)\footnote{Here each nondeterministic strategy corresponds to the communication with a prover.}, we have particular $Acc_N(w)$ and $Rej_N(w)$. 

A language $L \subseteq \Sigma^*$ is said to be weakly verified by $N$ with error bound $\epsilon < 1/2$ if and only if we satisfy the following two conditions.
\begin{itemize}
    \item When $ x \in L$: there is a nondeterministic strategy such that $Acc_N(w) \geq 1 - \epsilon $.
    \item When $x \notin L$: for any nondeterministic strategy we have $ Acc_N(w) \leq \epsilon  $. 
\end{itemize}
Here it is not required to reject the non-member inputs with high probability. 

A language $L \subseteq \Sigma^*$ is said to be strongly verified by $N$ with error bound $\epsilon < 1/2$ if and only if we satisfy the following two conditions.
\begin{itemize}
    \item When $ x \in L$: there is a nondeterministic strategy such that $Acc_N(w) \geq 1 - \epsilon $.
    \item When $x \notin L$: for any nondeterministic strategy we have $ Rej_N(w) \geq 1 - \epsilon  $. 
\end{itemize}

In these cases, it is also said that (i) $L$ is weakly/strongly verified by $N$ with bounded error or (ii) bounded-error $N$ weakly/strongly verifies $L$.

\section{Weak verification of every binary language}
\label{sec:weak-binary}

We start with a weak verification protocol where the input is read once and a single affine register is used and weighted once.

\begin{theorem}
    \label{thm:weak-binary}
    For any given binary language $L \subseteq \{0,1\}^*$, there is a 2ANfA, say $N_1$, weakly verifying it with error bound $1/3$  and never moving its head to the left. The 2ANfA $N_1$ uses a single affine register and it is weighted only once. For the member strings, the verification takes in exponential time. 
\end{theorem}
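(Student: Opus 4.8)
The plan is to have $N_1$ never move left by letting its single affine register carry two real-valued channels at once: a \emph{digit channel} $\beta$ that peels off the base-$6$ digits of the fixed constant $\alpha_L$ (hard-wired into $N_1$ since $L$ is fixed), and a \emph{countdown channel} $\gamma$ that is first loaded with the shortlex index of the input and then decremented once per peel. Because $w=\Sigma^*\bigl((1w)_2\bigr)$, the input $w$ sits at index $N_w:=(1w)_2$, so the $N_w$-th peeled digit is exactly $G_L(w)$. Acceptance will be engineered to require \emph{both} that every peeled digit agreed with $\alpha_L$ (so $|\beta|$ stays small) and that precisely $N_w$ peels were carried out (so $\gamma$ comes back to $0$), and a single weighting at the very end will test both conditions together.

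Concretely, on $\cent$ the verifier applies one hard-wired affine operator sending $e_1$ to the state with $\beta=\alpha_L=\alpha_L[1]$ and $\gamma=1$. Scanning $w$ rightward, reading $w_j$ performs $\gamma\mapsto 2\gamma+w_j$, so on reaching $\$$ we have $\gamma=(1w)_2=N_w$ while $\beta$ is unchanged. With the head parked on $\$$, a prover-driven loop then peels: at each step the prover names a guess $g\in\{0,1\}$, and $N_1$ applies $\beta\mapsto 6\beta-g$ and $\gamma\mapsto\gamma-1$ simultaneously; the prover decides when to halt, and the accept branch forces the final guess to be $1$. To make $\beta\mapsto 6\beta-g$, $\gamma\mapsto 2\gamma+w_j$, and $\gamma\mapsto\gamma-1$ honest affine operators (each column summing to $1$, so the entry total is preserved), I keep each channel as a reservoir pair $(1-\beta,\beta)$, $(1-\gamma,\gamma)$ whose companion entry absorbs the scaling, exactly as $\left(\begin{smallmatrix}1&-5\\0&6\end{smallmatrix}\right)$ realizes multiplication by $6$. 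One last affine operator repackages the register into the weighting form $\bigl(1,\ \tfrac54\beta,\ -\tfrac54\beta,\ \gamma,\ -\gamma\bigr)$, and $N_1$ then weights once, accepting iff the first coordinate is observed.

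In that form the $\ell_1$-norm is $1+\tfrac52|\beta|+2|\gamma|$, so the acceptance probability is $\bigl(1+\tfrac52|\beta|+2|\gamma|\bigr)^{-1}$. If $w\in L$, the honest prover peels exactly $N_w$ times with the true digits, ending at $\gamma=0$ and $\beta=\alpha_L[N_w+1]\in[0,\tfrac15]$, giving acceptance probability at least $(1+\tfrac12)^{-1}=\tfrac23$. If $w\notin L$, take any strategy and let $m$ be the number of peels on the accept branch. If $m\neq N_w$ then $|\gamma|=|N_w-m|\ge 1$ and the norm is at least $3$, so acceptance is at most $\tfrac13$; if $m=N_w$ then, since $G_L(w)=0$, the forced final guess $1$ (or some earlier guess) is wrong, so the first incorrect guess has index $\le N_w$ and \eqref{eq:incorrect-guess} gives $|\beta|\ge\tfrac45$, again making the norm at least $3$ and acceptance at most $\tfrac13$. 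Hence members are accepted with probability $\ge\tfrac23$ and every non-member strategy is accepted with probability $\le\tfrac13$, which is the claimed error bound.

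The hard part will be the coupling of the two channels. I must exhibit multiplication, the $2\gamma+w_j$ update, and the decrement as genuine affine maps that each leave the \emph{other} channel's pair fixed, and then choose the weighting-form scalings (the $\tfrac54$ on $\beta$ against the unit weight on $\gamma$) so that one weighting penalizes a wrong digit ($|\beta|\ge\tfrac45$) and a wrong count ($|\gamma|\ge1$) past the \emph{same} threshold, producing the clean $\tfrac23$ versus $\tfrac13$ separation. The remaining points are routine: the head only moves right or stays, the register is weighted exactly once, a member string halts after $\Theta(N_w)=2^{\Theta(|w|)}$ peels (the promised exponential time), and a prover that withholds the final commit merely generates a non-halting branch, which does no harm under weak verification.
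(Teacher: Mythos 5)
Your proposal follows essentially the same route as the paper's proof: the same encoding $\alpha_L$, the same five-entry register carrying a countdown to $(1w)_2$ alongside the peeled digit $\beta_i$, the same $6\beta-g$ update with $\tfrac54$ rescaling before a single weighting, and the same $\tfrac23$ versus $\tfrac13$ case analysis for members and for the two failure modes (wrong count, wrong digit). The one detail you defer---realizing the two channels as genuine affine maps---is handled in the paper by storing each quantity as a $(v,-v)$ pair alongside a dedicated unit entry (rather than your $(1-v,v)$ reservoir pairs, whose final repackaging into the weighting form is not obviously an affine operator), which makes all the required matrices explicit.
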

\begin{proof}
    We use an affine register with 5 states. Let $w \in \{0,1\}^*$ be the given input. Let $ K $ be equal to $1w$ in binary. Note that $\Sigma^*[K] = w$, and the membership bit of $w$ is the $K$th digit of $\alpha_L$.

    By reading $\cent w $ from left to right, we deterministically set the affine state to
    \[
    \myrvector{1 \\ K \\ -K \\ 0 \\ 0 } .
    \]
    Here we apply $A_0$ when reading $0$ and $A_1$ when reading $\cent$ or $1$:
    \begin{equation}
    \label{eq:Add-matrix}
    A_0 = 
    \mymatrix{rrrrr}{
    1 & 0 & 0 & 0 & 0  \\
    0 & 2 & 0 & 0 & 0  \\
    0 & -1 & 1 & 0 & 0  \\
    0 & 0 & 0 & 1 & 0  \\
    0 & 0 & 0 & 0 & 1  \\
    }
    ~~\mbox{ and }~~
    A_1 = 
    \mymatrix{rrrrr}{
    1 & 0 & 0 & 0 & 0  \\
    1 & 2 & 0 & 0 & 0  \\
    -1 & -1 & 1 & 0 & 0  \\
    0 & 0 & 0 & 1 & 0  \\
    0 & 0 & 0 & 0 & 1  \\
    } .
    \end{equation}
    
    On symbol $\dollar$, the tape head becomes stationary. First, the affine state is set to 
    \[
    \myrvector{1 \\ K \\ -K \\ \alpha_L \\ -\alpha_L } = 
    \mymatrix{rrrrr}{
    1 & 0 & 0 & 0 & 0  \\
    0 & 1 & 0 & 0 & 0  \\
    0 & 0 & 1 & 0 & 0  \\
    \alpha_L & 0 & 0 & 1 & 0  \\
    -\alpha_L & 0 & 0 & 0 & 1  \\
    } \myrvector{1 \\ K \\ -K \\ 0 \\ 0 } .
    \]
    After that, we enter an infinite loop. In each iteration ($i=1,2,3,\ldots$):
    \begin{itemize}
        \item Nondeterministically guess $G_L(\Sigma^*[i])$, say $g_i$.
        \item As explained in Section~\ref{sec:guessing}, the fourth (resp., fifth) entry of affine state is multiped by 6 and the guessed value $g_i$ is subtracted (resp., added).
        \item At the same time, the second (resp., third) entry of affine state is decreased (resp., increased) by 1.   
         We use the following affine operators for these updates:
        \[
            A'_g = 
            \mymatrix{rrrrr}{
            1 & 0 & 0 & 0 & 0  \\
            -1 & 1 & 0 & 0 & 0  \\
            1 & 0 & 1 & 0 & 0  \\
            -g & 0 & 0 & 6 & 0  \\
            g & 0 & 0 & -5 & 1  \\
            },
        \]
         where $g \in \{0,1\}$ is the guessed value.
        \item The loop is continued in one branch, and the loop is terminated in another branch.
    \end{itemize}
    The decision is made in the branch existing from the loop. Once existed, the affine state is
    \[
    \myrvector{1 \\ K-i \\ -K+i \\ \beta_i \\ -\beta_i}.
    \]
    And we update the affine state as follows
    \begin{equation}
        \label{eq:betai}
    \myrvector{1 \\ K-i \\ -K+i \\  5\beta_i / 4 \\ -5\beta_i / 4 }
    =
    \mymatrix{rrrrr}{
    1 & 0 & 0 & 0 & 0  \\
    0 & 1 & 0 & 0 & 0  \\
    0 & 0 & 1 & 0 & 0  \\
    0 & 0 & 0 & 5/4 & 0  \\
    0 & 0 & 0 & -1/4 & 1  \\
    }
    \myrvector{1 \\ K-i \\ -K+i \\ \beta_i \\ -\beta_i}.
    \end{equation}
    If $g_i$ is 0, then the input is rejected. If $g_i$ is 1, then we weight the affine state. If the outcome is $e_1$, then $w$ is accepted, and it is rejected, otherwise.

    If $w \in L$, then there is a nondeterministic branch such that the loop is executed $K$ times, the first $K$ digits of $\alpha_L$ is guessed correctly, and $g_K = 1$. The affine register in this case before being weighted is
    \[
    \myrvector{1 \\ 0 \\ 0 \\ 5 \alpha_L[K+1] / 4 \\ -5 \alpha_L[K+1] / 4}.
    \]
    We know that $\alpha_L[K+1]$ can be at most $\frac{1}{5}$. Then, the accepting probability of $w$ is at least
    \[
    \dfrac{1}{1+1/4+1/4} = \dfrac{2}{3},
    \]
    where the contribution of the fourth and fifth entries can be at most $1/4$.

    If $w \notin L$, then there are different cases. 
    \begin{itemize}
        \item If $g_i = 0$, $w$ is rejected with probability 1.
        \item If $g_i = 1$ but $i \neq K$, the absolute values of the second and third entries are at least 1. Then, $w$ is rejected with probability at least $$\frac{2}{2+1} = \frac{2}{3} . $$
        \item If $g_i = 1$ and $i=K$, at least one guess of digits of $\alpha_L$ is incorrect ($g_i$ is definitely incorrect). Then, we know that $| \beta_i |$ is bounded below by $4/5$ in this case. That is, the absolute values of the fourth and fifth entries are at least 1 (see Eq.~\ref{eq:betai}). Thus, $w$ is rejected with probability at least $$\frac{2}{2+1} = \frac{2}{3}.$$
    \end{itemize}

    We calculate the running time of the successful branch for members. Let $l = |w|$  be the length of input. We read $l+2$ symbols before entering the infinite loop. We iterate the loop $1w$ (in binary) times, and this can be at most $ \underbrace{1 ~~ \cdots ~~ 1}_{l+1\mbox{ times}} $ (in binary), that is, $$ = 2^{l+1}-1 . $$ We use two more transition steps to make our decision. Thus, the total number of transition steps can be at most
    \[
        l + 2 + 2^{l+1} - 1 + 2 = 2^{l+1} + l + 3. 
    \]
    So, the running time of the success branch is exponential in $|w|$: $O(2^{|w|})$.
\end{proof}

By using bigger denominator for $\alpha_L$ and tuning the transitions of the affine operator given in Eq.~\ref{eq:betai}, we can reduce the error bound arbitrarily close to 0. For simplicity, let $\frac{1}{k}$ be our error bound, where $k \geq 3$ is an integer. Then, we use $\alpha_{L,k}$ instead of $\alpha_L$:
\[
    \alpha_{L,k} = \sum_{i=1}^{\infty} \frac{G_{L}\left(\Sigma^*(i)\right)}{d^i}=\frac{G_{L}\left(\Sigma^*(1)\right)}{d}+\frac{G_{L}\left(\Sigma^*(2)\right)}{d^2}+\frac{G_{L}\left(\Sigma^*(3)\right)}{d^3}+\cdots ,
\]
where $d = k^2-2k + 3$. (We observe that $d = 6 $ if $k=3$.) We can easily verify that $ 0 \leq \alpha_{L,k}[j] \leq \dfrac{1}{d-1}$ for every $j$. 

Then, we replace Eq.~\ref{eq:betai} as
\begin{equation}
    \label{eq:betai_2}
    \myrvector{1 \\ k(K-i)/2 \\ -k(K+i)/2 \\  c\beta_i \\ -c\beta_i }
    =
    \mymatrix{rrrrr}{
    1 & 0 & 0 & 0 & 0  \\
    0 & k/2 & 0 & 0 & 0  \\
    0 & 1-k/2 & 1 & 0 & 0  \\
    0 & 0 & 0 & c & 0  \\
    0 & 0 & 0 & 1-c & 1  \\
    }
    \myrvector{1 \\ K-i \\ -K+i \\ \beta_i \\ -\beta_i},
\end{equation}
where $c = \dfrac{k^2-2k+2}{2k-2}$.

For a member string, the final affine state is
\[
\myrvector{1 \\ 0 \\ 0 \\ c\beta_i \\ -c\beta_i}.
\]
Then, the accepting probability is
\[
\dfrac{1}{2|c\beta_i|+1}.
\]
The upper bound for $|c\beta_i|$ is 
\[
    \underbrace{\mypar{ \dfrac{k^2-2k+2}{2k-2} }}_{c} 
    \underbrace{ \mypar{ \dfrac{1}{k^2-2k+2} } }_{1/(d-1)} = \dfrac{1}{2k-2}.
\]
By substituting it, the accepting probability $\dfrac{1}{2|c\beta_i|+1}$ is at least
\[
    \dfrac{1}{2 \mypar{ \frac{1}{2k-2}}+1} = \frac{k-1}{k}.
\]

For a non-member string. we have three cases. 
\begin{itemize}
    \item If $g_i=0$, then the input is rejected with probability 1.
    \item If $g_i=1$ but $i \neq K$, then $(K-i)$ is a non-zero integer, and with the contribution of the second and third entries (see Eq.~\ref{eq:betai_2}), the rejecting probability is at least 
    \[
    \dfrac{k/2 + k/2}{ k/2+k/2+1 } = \dfrac{k}{k+1}.
    \]
    \item If $g_i=1$ and $i = K$, the final affine state is
    \[
    \myrvector{1 \\ 0 \\ 0 \\ c\beta_i \\ -c \beta_i}.
    \]
    In such case, $|\beta_i|$ is bounded below by $ \dfrac{d-2}{d-1} $ (easy to be derived from Section~\ref{sec:guessing}). Then, the rejecting probability is 
    \[
    \dfrac{2c|\beta_i|}{2c|\beta_i|+1},
    \]
    which is at least
    \[
    \dfrac{ 2c \mypar{ \frac{d-1}{d-2}} }{ 2c \mypar{\frac{d-1}{d-2}}+1 }.
    \]
    It is easy to see that 
    \[
    2c \mypar{ \dfrac{d-1}{d-2} } 
    = 
    2 \underbrace{\mypar{\dfrac{k^2-2k+2}{2k-2} }}_{c} 
    \underbrace{\mypar{ \dfrac{k^2-2k+1}{k^2-2k+2} }}_{(d-1)/(d-2)} 
    = \dfrac{(k-1)^2}{k-1}
    = k-1.
    \]
    So, the rejecting probability is at least 
    \[
        \dfrac{k-1}{k}.
    \]
\end{itemize}

\begin{corollary}
    \label{cor:weak-binary}
    For any given binary language $L \subseteq \{0,1\}^*$, there is a 2ANfA weakly verifying it with any error bound and never moving its head to the left. It uses a single affine register and it is weighted only once. For the member strings, the verification takes in exponential time.
\end{corollary}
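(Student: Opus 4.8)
The plan is to keep the verifier $N_1$ of Theorem~\ref{thm:weak-binary} intact as a template and to drive the error down through a single integer parameter. Given a target error bound, I first fix an integer $k\geq 3$ small enough that $1/k$ is at most the desired bound, so it suffices to construct, for each such $k$, a 2ANfA weakly verifying $L$ with error $1/k$. The only change at the level of the encoding is to replace the base $6$ by the larger base $d=k^2-2k+3$ and to encode the membership information of $L$ in $\alpha_{L,k}$ rather than $\alpha_L$. Before touching the automaton I would record two numerical facts that make everything else routine: $0\leq\alpha_{L,k}[j]\leq\frac{1}{d-1}$ for every $j$, and that the incorrect-guess analysis of Section~\ref{sec:guessing} transfers verbatim with its constant $\frac{4}{5}$ replaced by $\frac{d-2}{d-1}$ (these agree when $k=3$, i.e.\ $d=6$).

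With these in hand I would reuse the entire control flow of $N_1$: read $\cent w$ to load the counter $K=(1w)_2$ into the second and third coordinates, install $\alpha_{L,k}$ into the fourth and fifth coordinates upon reading $\dollar$, and then run the same nondeterministic guessing loop, the only difference being that the operator $A'_g$ is replaced by its base-$d$ analogue (the fourth coordinate is multiplied by $d$ and $g$ subtracted, the fifth updated symmetrically). The substantive modification is confined to the single rescaling step performed when the loop is exited: in place of Eq.~\ref{eq:betai} I would apply the operator of Eq.~\ref{eq:betai_2}, which scales the counter coordinates by $k/2$ and the $\beta$-coordinates by $c=\frac{k^2-2k+2}{2k-2}$, and these matrices are still genuine affine operators because each column preserves the entry-summation constraint $\sum_j v[j]=1$. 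The decision rule is unchanged: reject when $g_i=0$, otherwise weight once and accept exactly when the outcome is $e_1$. Because the head motion, the single weighting, and the loop length are all inherited from $N_1$, the claims about never moving left, a single register weighted once, and the $O(2^{|w|})$ running time for members require no separate argument.

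It then remains to redo the two probability estimates, each of which compares the weight on the first coordinate against the rest. For $w\in L$ the honest strategy guesses correctly and exits at $i=K$ with $g_K=1$; the counter coordinates vanish, $|c\beta|\leq c\cdot\frac{1}{d-1}=\frac{1}{2k-2}$, and the accepting probability is $\frac{1}{1+2|c\beta|}\geq\frac{k-1}{k}$. For $w\notin L$ I would run the same three-case split over an arbitrary strategy: $g_i=0$ rejects deterministically; $g_i=1$ with $i\neq K$ leaves counter coordinates of absolute value at least $k/2$, so the acceptance is at most $\frac{1}{k+1}$; and $g_i=1$ with $i=K$ forces an incorrect guess, whence $|\beta|\geq\frac{d-2}{d-1}$ and the acceptance is at most $\frac{1}{k}$. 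In every case the acceptance is at most $1/k$, which is the required soundness bound.

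I expect the only delicate point to be the simultaneous tuning carried out in Eq.~\ref{eq:betai_2}. The scalars $k/2$ and $c$, together with the base $d=k^2-2k+3$, must be chosen so that two unrelated-looking targets hold at once for the same $c$ and $d$: on the accepting side $c\cdot\frac{1}{d-1}=\frac{1}{2k-2}$, and on the rejecting side $2c\cdot\frac{d-2}{d-1}=k-1$. Checking that these two algebraic constraints are consistent --- and that they collapse to the single clean bound $\frac{k-1}{k}$ on both sides --- is the real content of the argument; once the parameters are pinned down, the rest is a direct transcription of the proof of Theorem~\ref{thm:weak-binary}.
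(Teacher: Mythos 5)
Your proposal matches the paper's own argument essentially step for step: the same enlarged base $d=k^2-2k+3$, the same rescaling operator of Eq.~\ref{eq:betai_2} with $c=\frac{k^2-2k+2}{2k-2}$ and counter scaling $k/2$, and the same three-case soundness analysis collapsing to the bound $\frac{k-1}{k}$ on both sides (you even correctly use the lower bound $|\beta_i|\geq\frac{d-2}{d-1}$, where the paper's displayed formula has a typo inverting that fraction). The only blemish is the phrase ``small enough that $1/k$ is at most the desired bound,'' which should of course read ``large enough.''
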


We can extend the results in this section for $r$-ary languages, where $r>2$. Let $w$ be an $r$-ary string with length $l>0$. When ordering only the strings with length $l$, we have
\[
\begin{array}{ll}
    1 & 0 \cdots 0 0 \\
    2 & 0 \cdots 0 1 \\
    3 & 0 \cdots 0 2 \\
    \vdots & \vdots \\
    (w)_r+1 ~ & w \\
    \vdots & \vdots \\
    r^l & 1 \cdots 1 1
\end{array},
\]
where $(w)_r$ is a number in base-$r$. Then, the place of $w$ in the shortlex order is
\[
    (w)_r+1 + \mypar{ r^{l-1} + r^{l-2} + \cdots + r^1 + r^0 },
\]
where the right side is the number of all strings shorter than $w$, i.e., $r^j$ is the number of strings with length $j$. 

In the proof of Theorem~\ref{thm:weak-binary}, the $K$th digit of $\alpha_L$ represents the membership bit of binary string $w$. So, here, if we calculate $K$ correctly for $r$-ary string $w$ after reading $\cent w \dollar$, the rest of the proof will be the same. As already shown in the previous paragraph, 
\[
K = (w_1 \cdot r^{l-1} + w_2 \cdot r^{l-2} + \cdots + w_{l-1} \cdot r^1 + w_l)  + ( r^{l-1}+r^{l-2} + \cdots + r^0 ) + 1 , 
\]
which is rearranged as
\[
    (w_1+1) \cdot r^{l-1} + (w_2+1)\cdot r^{l-2} + \cdots (w_2+1) \cdot r^1 + (w_l+1) \cdot r^0 + 1. 
\]
Encoding by affine operators is straightforward. Our aim is to set the affine state to
\[
\myrvector{1 \\ K \\ -K \\ 0 \\ 0}
\]
before entering the infinite loop. We read $ \cent ~ w_1 ~ w_2 ~ \cdots ~ w_l ~ \dollar$, and we iteratively calculate $K$. We use the following matrices for each symbol of $w$:
\[
    A''_b = 
    \mymatrix{rrrrr}{
    1 & 0 & 0 & 0 & 0  \\
    b+1 & r & 0 & 0 & 0  \\
    -b-1 & 1-r & 1 & 0 & 0  \\
    0 & 0 & 0 & 1 & 0  \\
    0 & 0 & 0 & 0 & 1  \\
    },
\]
where $b=0,1,\ldots,r-1$.
After that, on the $\dollar$ symbol, we add 1 to the second entry and subtract 1 from the third entry. After reading $\cent$, the affine state is
\[
\myvector{1 \\ 0 \\ 0 \\ 0 \\ 0}.
\]
After reading $w$ and then $\dollar$, we have
\[
\myrvector{1 \\ K \\ -K \\ 0 \\ 0 }
\]
just before entering the infinite loop. 

\begin{corollary}
    \label{cor:weak-rary}
    For any given $r$-ary language $L \subseteq \{0,1,\ldots,r-1\}^*$, where $r>1$, there is a 2ANfA weakly verifying it with any error bound and never moving its head to the left. It uses a single affine register and it is weighted only once. For the member strings, the verification takes in exponential time.
\end{corollary}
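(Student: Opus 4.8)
The plan is to derive Corollary~\ref{cor:weak-rary} by reduction to the binary machinery of Theorem~\ref{thm:weak-binary} and Corollary~\ref{cor:weak-binary}, changing nothing except how the shortlex index is computed while reading the input. The key observation is that every part of the binary protocol occurring after the affine state first reaches $(1,K,-K,0,0)^T$ --- installing $\alpha_{L,k}$ in the fourth and fifth entries, the nondeterministic guessing loop driven by $A'_g$, the rescaling of Eq.~\ref{eq:betai_2}, and the final weighting and decision --- refers only to the integer $K$ and is completely insensitive to the alphabet size. Since $K$ is the shortlex index of $w$, the $K$-th digit of $\alpha_{L,k}$ is exactly the membership bit $G_L(w)$, so the accepting and rejecting probability bounds, and hence the entire error analysis, transfer verbatim from Corollary~\ref{cor:weak-binary}. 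The only new ingredient is a left-to-right, single-register computation of $K$ for an $r$-ary string.

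First I would pin down the index. With $|w| = l$, there are $r^{l-1}+\cdots+r^0$ strings shorter than $w$ and exactly $(w)_r$ strings of length $l$ preceding it, so $K = (w)_r + 1 + (r^{l-1}+\cdots+r^0)$, which rearranges to $K = \sum_{i=1}^{l}(w_i+1)\,r^{l-i} + 1$. Second, I would realize this value by Horner's rule in the second entry of the register, mirrored (negated) in the third: reading $\cent$ resets the state to $(1,0,0,0,0)^T$, and each symbol $w_i = b$ applies $A''_b$, sending a running value $S$ in the second entry to $rS+(b+1)$ and placing $-(rS+(b+1))$ in the third. After $w_l$ the second entry holds $\sum_i(w_i+1)r^{l-i}$; the transition on $\dollar$ supplies the trailing $+1$, yielding $(1,K,-K,0,0)^T$. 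From there I would append, still stationary on $\dollar$, the operator that loads $\alpha_{L,k}$ and enter the unchanged infinite loop.

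The routine verifications are that each $A''_b$ is a valid affine operator --- every column sums to $1$, read off directly as $1+(b+1)-(b+1)$, then $r+(1-r)$, then $1$, $1$, $1$ --- and that the Horner recurrence terminates at the claimed $K$, both immediate by induction on $|w|$. For timing, reading $\cent w \dollar$ costs $l+O(1)$ steps and the successful loop runs at most $K \le (r^{l+1}-1)/(r-1) = O(r^l)$ iterations, so the member branch halts in time exponential in $|w|$, as required.

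I do not expect a genuine obstacle, since the mathematical weight of the result lives in Theorem~\ref{thm:weak-binary}; the one place demanding care is confirming that the base-$r$ encoding lands on precisely the correct shortlex index, because an off-by-one there would weight the digit of the wrong string and silently destroy the soundness inherited from Corollary~\ref{cor:weak-binary}. Everything downstream of $(1,K,-K,0,0)^T$ is alphabet-agnostic and requires no re-proving.
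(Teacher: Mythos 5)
Your proposal is correct and follows essentially the same route as the paper: derive the shortlex index $K = \sum_{i=1}^{l}(w_i+1)r^{l-i}+1$, realize it by a Horner-style recurrence using the $A''_b$ operators (with the trailing $+1$ supplied on $\dollar$) so that the affine state reaches $(1,K,-K,0,0)^T$, and then reuse the binary protocol verbatim since everything after that point depends only on $K$. The only additions beyond the paper's text are your explicit column-sum check for $A''_b$ and the $O(r^l)$ timing bound, both of which the paper leaves implicit.
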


\section{Strong verification of every language}
\label{sec:strong-verification}

It is possible to terminate the computation in our weak verification protocol by adding a probabilistic procedure: in each iteration of the infinite loop, we reject the input with some tuned exponentially small probability. For this, we use an additional affine register with two states.

\begin{theorem}
    \label{thm:strong-coin}
    For any given binary language $L \subseteq \{0,1\}^*$, there is a 2ANfA, say $N_2$, with two affine registers strongly verifying it with bounded error in exponential expected time.
\end{theorem}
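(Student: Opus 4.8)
The plan is to reuse the weak verifier of Corollary~\ref{cor:weak-binary} essentially verbatim as the ``main'' computation in the first (five-state) register, and to devote the second (two-state) register to a probabilistic ``timeout'' that forces every branch to halt. Concretely, I would first instantiate the weak protocol so that, at loop exit, a member is accepted and a non-member is rejected with error at most $\epsilon/2$ (take $k$ large in Corollary~\ref{cor:weak-binary}). The only way a non-member escapes rejection in that protocol is a nondeterministic strategy that never exits the infinite loop; such a branch contributes nothing to $Rej_N(w)$. The job of the second register is therefore to inject, in each loop iteration, an independent rejection event of a fixed small probability $q>0$, so that any non-exiting branch is rejected with probability $1-(1-q)^{t}\to 1$, while the distinguished accepting branch for a member (which runs for only $K\le 2^{|w|+1}$ iterations) survives with probability $(1-q)^{K}\ge 1-Kq$.

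This immediately pins down the tuning. Because a member's good branch runs for up to $K\approx 2^{|w|+1}$ iterations, we need $Kq\le\epsilon/2$, hence $q=\Theta(\epsilon\,2^{-|w|})$ must be exponentially small; and because a non-member's non-exiting branch runs for unboundedly many iterations, $q$ must be the \emph{same} constant in every iteration (a per-iteration probability that decays would leave $\sum_i q_i$ bounded and let a non-member survive forever with constant probability). Combining the $1-\epsilon/2$ survival bound with the $1-\epsilon/2$ decision bound of the weak protocol gives acceptance $\ge 1-\epsilon$ for members; for non-members the coin can only lower acceptance below $\epsilon$, and since every branch now halts with probability $1$ we get $Rej_N(w)=1-Acc_N(w)\ge 1-\epsilon$. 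The expected number of iterations on a non-exiting branch is $1/q=\Theta(2^{|w|}/\epsilon)$, which together with the $O(|w|)$ cost of one iteration yields the claimed exponential expected running time.

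The main obstacle, and the only real content beyond Corollary~\ref{cor:weak-binary}, is producing this fixed, length-dependent, exponentially small probability $q\approx\epsilon\,2^{-|w|}$ with a \emph{constant-size} register, given that (i) weighting collapses the two-state register to a basis vector, so it must be re-prepared every iteration, and (ii) no single affine operator in the fixed set $\mathcal{A}_2$ may encode the length $|w|$. My resolution is to make the smallness come from traversing the $|w|$ input cells rather than from any one operator: keep the register as a re-preparable \emph{fair} coin $\myrvector{1/2 \\ 1/2}$ (prepared from either $e_1$ or $e_2$ by the single fixed operator all of whose entries are $1/2$), and in each outer iteration run an inner subroutine that flips this coin using the two-way head as an implicit counter, a ``heads'' moving the head one cell left, a ``tails'' returning it to $\dollar$ and resuming the outer loop, and reaching $\cent$ triggering rejection. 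Then one outer iteration rejects with probability exactly $2^{-(|w|+1)}$, and padding with a constant number of extra flips counted in the finite control brings this to $q\le\epsilon\,2^{-(|w|+2)}$, so that $Kq\le\epsilon/2$. Throughout the inner subroutine the first register is only multiplied by the identity, so it still reaches the loop-exit configuration of the weak protocol unchanged and is weighted exactly once; equivalently one may re-scan $\cent w\dollar$ applying a fixed halving operator to the second register to build $\myrvector{q \\ 1-q}$ directly and weight it once per iteration.

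Finally I would assemble the two pieces into a single $\delta=(\delta_a,\delta_c)$ for a two-register 2ANfA, check that the head never leaves $\tilde{w}$, and re-run the three non-member cases of Theorem~\ref{thm:weak-binary} to confirm that inserting the coin does not raise the acceptance probability in any of them. The verification of the error bounds then reduces to the elementary estimate $Kq\le\epsilon/2$ together with the observation that acceptance is monotone non-increasing under added rejection, so no further lengthy calculation is needed.
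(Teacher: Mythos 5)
Your proposal is correct and follows essentially the same route as the paper: augment the weak verifier with a second two-state affine register that, once per loop iteration, rejects with a fixed exponentially small probability built up by sweeping the head across the $|w|+1$ cells, so that non-exiting branches halt and reject with probability $1$ while the member's good branch (running $K\le 2^{|w|+1}-1$ iterations) survives with probability $(1-q)^{K-1}$, giving expected time $O(|w|/q)=2^{O(|w|)}$. The only differences are cosmetic: the paper uses a per-cell factor of $1/64$ with a single weighting at the end of each sweep (so $Kq$ is automatically negligible and the error is later reduced by majority voting), whereas you use a fair coin with per-step weightings plus a constant amount of $\epsilon$-dependent padding to force $Kq\le\epsilon/2$ directly.
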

\begin{proof}
    We modify the 2ANfA $N_1$ given in Theorem~\ref{thm:weak-binary}. Let $w \in \{0,1\}^*$ be the given input with length $|w| > 0$. For the empty string, we make decision deterministically.

    In the infinite loop, after one iteration is completed and just before starting the next iteration, we do a probabilistic experiment and reject the input with probability 
    \[
        \frac{1}{64^{|w|+1}}.        
    \]
    We implement this by moving the input head from one end-marker to another end-marker. For each input symbol and one end-marker, we use an affine operator given below:
    \begin{equation}
    \label{eq:CoinPr}
        \myvector{1\\ \frac{1}{64^{|w|+1}} } =
        \mymatrix{cc}{ 1/64 & ~0 \\ 63/64 & ~1 }^{|w|+1} \myvector{1 \\ 0},
    \end{equation}
    where we start in the first affine state and then apply the affine operator $(|w|+1)$ times.
    We weight this register, and  we reject the input if the first affine state is observed. Otherwise, we continue with the next iteration, and we set this affine register to $e_1$.

    It is clear that this probabilistic experiment does not produce any accepting probability. But, it helps us to terminate the computation in exponential expected time. Suppose that the infinite loop is terminated by only this probabilistic experiment. Then, we give an upper bound for the expected time independent of the nondeterministic strategies. 
    
    Each iteration takes less than $3|w|$ steps, and each iteration terminates with probability $p = \frac{1}{64^{|w|+1}}$ and the next iteration starts with probability $1 - p$. Thus, the expected running time is less than
    \[
        (3|w|) p + (6|w|) (1-p) p + (9|w|) (1-p)^2 p + (12|w|)(1-p)^3 p + \cdots,
    \]
    which is equal to
    \begin{equation}
        \label{eq:exp-time-1}
        3p|w| \mypar{ 1 + 2(1-p) + 3 (1-p)^2 + 4(1-p)^3 + \cdots
    }.
    \end{equation}
    We know that, if $x<0$,
    \[
        1 + x + x^2 + x^3 + x^4 +  \cdots = \frac{1}{1-x}.
    \]
    If we take derivatives of both sides, we have
    \[
    1 + 2x+ 3x^2+ 4x^3 + \cdots = \frac{1}{(1-x)^2}. 
    \]
    Thus, Eq.~\ref{eq:exp-time-1} is simplied as
    \[
    3p|w|\frac{1}{(1-(1-p))^2} = 3p|w|\frac{1}{p^2} = 3 p^{-1} |w|.
    \]
    By substituting $p$, we obtain
    \[
        3|w|64^{|w|+1} = 2^{O(|w|)}.
    \]
    Thus, we ensure that any verification strategy runs in at most exponential expected time.

    If $w \notin L$, it is trivial that the accepting probability is not more than $1/3$, and so, the rejecting probability is at least $2/3$. Remark that the computation terminates with probability 1.

    If $w \in L$, the new accepting probability is calculated as follows. The loop is iterated $K$ ($1w$ in binary) times. 
    \begin{itemize}
        \item The 1st iteration starts with probability 1.
        \item The 2nd iteration starts with probability $(1-p)$.
        \item The 3rd iteration starts with probability $ (1-p)(1-p) = (1-p)^2 $.
        \item The 4th iteration starts with probability $ (1-p)^2 (1-p) = (1-p)^3 $.
        \item $ \cdots $ .
        \item The $K$th iteration starts with probability $ (1-p)^{K-1} $.
    \end{itemize}
    Thus, the accepting probability is at least
    \[
        (1-p)^{K-1} \frac{2}{3}.
    \]
    The value of $K$ is between $2^{|w|}$ and $ 2^{|w|+1}-1$. Then, the value of $(1-p)^{K-1}$ is at least 
    
    \[
    \mypar{ 1 - \frac{1}{64^{|w|+1}} }^{2^{|w|+1}-2}.
    \]
    Remark that for the same length of strings, this expression takes its minimum when $K-1=2^{|w|+1}-2$.
    This expression approaches to 1 quickly. The case for the empty string (when in the language) is trivial, as the probabilistic procedure is never executed.  For $|w|=1$, the value of $(1-p)^{K-1}$ is
    \[
        \mypar{ \dfrac{4095}{4096} }^2 > 1 - \dfrac{2}{4096} > 0.9995.
    \] For bigger $|w|$ values, we get values even closer to 1. Therefore, the accepting probability of member inputs is greater than
    \[
        \mypar{0.9995} \dfrac{2}{3}.
    \]
    
    By executing $N_2$ on the given input several times and then take the majority of answers, we can decrease the error bound arbitrarily close to 0. The number of repetition depends on the error bounds, not the input length. Thus, the expected running time is still exponential. 
\end{proof}

\begin{corollary}
    \label{cor:strong-rary}
    For any given $r$-ary language$L \subseteq \{0,1,\ldots,r-1\}^*$, where $r>1$, there is a 2ANfA with two affine registers strongly verifying it with bounded error in exponential expected time.
\end{corollary}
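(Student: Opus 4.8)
The plan is to reuse the strongly-verifying construction $N_2$ of Theorem~\ref{thm:strong-coin} essentially verbatim, replacing only its binary index-computation stage with the $r$-ary one already developed for Corollary~\ref{cor:weak-rary}. Recall that $N_2$ is obtained from the weak verifier $N_1$ by keeping the first affine register together with the nondeterministic guessing loop, which accepts a member after exactly $K$ iterations with probability at least $\frac{2}{3}$ while keeping the accepting probability of every non-member below $\frac{1}{3}$, and by adding a second two-state register that in each iteration rejects with a small tuned probability $p$, forcing termination. For $r$-ary inputs, Corollary~\ref{cor:weak-rary} already shows how to set the first register to $\myrvector{1 \\ K \\ -K \\ 0 \\ 0}$ using the operators $A''_b$ while reading $\cent w \dollar$, where $K$ is the shortlex index of $w$; from that point on the guessing loop and the decision step are identical to the binary case. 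Hence the weak part carries over with no change beyond the input-reading operators.

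First I would import the probabilistic-termination register of $N_2$ unchanged in form. It depends only on $|w|$ and not on the alphabet, since it is implemented (Eq.~\ref{eq:CoinPr}) by sweeping the head across the $|w|+2$ tape cells and applying a fixed $2\times 2$ affine operator once per symbol. Thus the same mechanism producing per-iteration rejection probability $p$ is available for any $r$, and it contributes no accepting probability, so the non-member rejection bound is untouched.

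The one genuine adjustment concerns the magnitude of $K$. For a binary string $K$ lies between $2^{|w|}$ and $2^{|w|+1}-1$, whereas for an $r$-ary string of length $l$ the formula derived before Corollary~\ref{cor:weak-rary} gives $K \le \frac{r^{l+1}-1}{r-1}$, i.e. $K = \Theta(r^{|w|})$. I would therefore set the per-iteration rejection probability to $p = \dfrac{1}{D^{|w|+1}}$ for a constant $D$ chosen strictly larger than $r$ (for instance $D = 64\,r$), so that $pK = O\!\mypar{(r/D)^{|w|+1}}$ is exponentially small. Then the acceptance probability of a member, which as in Theorem~\ref{thm:strong-coin} is at least $(1-p)^{K-1}\cdot\frac{2}{3}$, stays arbitrarily close to $\frac{2}{3}$; and, since each iteration costs $O(|w|)$ steps and terminates with probability $p$, the expected running time is $O(|w|/p) = O(|w|\,D^{|w|+1}) = 2^{O(|w|)}$, reproducing the derivation leading from Eq.~\ref{eq:exp-time-1}. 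Amplification by repetition and majority vote, starting from an arbitrarily small weak error as in Corollary~\ref{cor:weak-rary}, then drives the total error below any prescribed $\epsilon$.

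The main obstacle is the balancing of the single parameter $p$: it must be small enough that $(1-p)^{K-1}$ does not decay despite $K$ now growing like $r^{|w|}$, yet large enough that $1/p$ remains single-exponential so the expected time stays in $2^{O(|w|)}$. Both constraints hold simultaneously precisely when the base $D$ of $p^{-1}=D^{|w|+1}$ exceeds $r$, which is the only place where the constant $64$ of Theorem~\ref{thm:strong-coin} must be enlarged for the $r$-ary setting.
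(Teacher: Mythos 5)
Your proposal is correct and follows the same route the paper intends: the corollary is obtained exactly by plugging the $r$-ary index computation of Corollary~\ref{cor:weak-rary} (the operators $A''_b$ producing the shortlex index $K$) into the construction of Theorem~\ref{thm:strong-coin}, whose termination register and analysis depend only on $|w|$ and on $pK$ being small. Your observation that the base $64$ of $p^{-1}$ must be replaced by some $D>r$ (since $K=\Theta(r^{|w|})$ rather than $\Theta(2^{|w|})$) is a genuine and necessary detail that the paper leaves implicit, and your bounds $(1-p)^{K-1}\to 1$ and expected time $O(|w|\,D^{|w|+1})=2^{O(|w|)}$ are exactly what is needed.
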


\section{Conclusion}\label{sec:con}

It is known that 2QCFAs can verify every unary language in exponential expected time, and every binary (and $r$-ary) language in double-exponential expected time \citep{say2017}. \cite{khadieva2021} later showed that AfAs can verify every unary language in linear time (by reading the input in realtime mode). Here we show that using two-way affine automata improves the known quantum time complexity exponentially for binary ($r$-ary) languages: we reduce double-exponential expected time to single-exponential time.

First, we formally define two-way affine automata for the first time, which are not a trivial generalization of (realtime) affine finite automata. We then introduce two verification protocols using different techniques. While we could adapt the 2QCFA technique by using a couple of affine registers, we present a weaker protocol that uses a single affine register, weighted once, and never moves the input head left. We use one additional affine register and extend this to a stronger protocol. The additional affine register basically mimics a probabilistic coin. We remark that both of our techniques can be adapted for 2QCFA verifiers but still with double-exponential time complexity.

As a future direction, the verification power of two-way affine finite automata can be investigated by restricting to use only rational-valued transitions.

\section*{Acknowledgments} 

We are grateful to Professor Junde Wu for initiating and supporting the  collaboration between the authors.

Yakary{\i}lmaz was partially supported by the Latvian Quantum Initiative under European Union Recovery and Resilience Facility project no. 2.3.1.1.i.0/1/22/I/CFLA/001.








\bibliography{ref}

\end{document}